\newtheorem{thm}{Theorem}[section]
\newtheorem{lem}[thm]{Lemma}
\theoremstyle{definition}
\newtheorem{defn}[thm]{Definition}
\theoremstyle{remark}
\numberwithin{equation}{section}
\newcommand{\eps}{\varepsilon}
\def\id{\operatorname{Id}}
\def\sign{\operatorname{sign}}
\def\CT{{\cal C}}
\def\R{\mathbb R}
\def\eps{\varepsilon}
\def\errmat{{\mathcal E}}
\renewcommand{\cos}{\operatorname{cos}}
\renewcommand{\sin}{\operatorname{sin}}
\title{An Omega((n log n)/R) Lower Bound for Fourier Transform Computation in the R-Well Conditioned Model  }
\author{Nir Ailon \\ Department of Computer Science \\ Technion Israel Institute of Technology \\ Haifa, Israel \\ \texttt{nailon@cs.technion.ac.il}}
\begin{document}

\maketitle

\def\sign{\operatorname{sgn}}
\def\dim{n}
\def\C{\mathbb C}
\def\R{\mathbb R}
\def\P{\mathcal P}
\def\trace{\operatorname{tr}}
\def\diag{\operatorname{diag}}
\def\rank{\operatorname{rank}}
\def\F{{\mathcal F}}
\def\Id{\operatorname{Id}}
\def\f{{\hat f}}

\begin{abstract}
Obtaining a non-trivial (super-linear) lower bound for computation of the Fourier transform in the linear circuit model has been a long standing open problem for over 40 years.

An early result by Morgenstern from 1973, provides an $\Omega(n \log n)$ lower bound for the unnormalized Fourier transform when the constants used in the computation are bounded.  The proof uses a  potential function related to a determinant.  That result does not explain why the normalized Fourier transform (of unit determinant) should be difficult to compute in the same model. Hence, it is not scale insensitive.

More recently, Ailon (2013) showed that if only unitary 2-by-2 gates are used, and additionally no extra memory is allowed, then the normalized Fourier transform requires $\Omega(n\log n)$ steps.  This rather limited result is also sensitive to scaling, but highlights the complexity inherent in the Fourier transform arising from introducing entropy, unlike, say, the identity matrix (which is as complex as the Fourier transform using Morgenstern's arguments, under proper scaling).

In this work we extend the arguments of Ailon (2013).  In the first extension, which is also the main contribution, we provide a lower bound for computing any scaling of the Fourier transform.  Our restriction is that,  the composition of all gates up to any point  must be a well conditioned linear transformation.    The lower bound is $\Omega(R^{-1}n\log n)$, where $R$ is the uniform condition number.   The model is realistic for algorithms computing linear transformations, because low condition number promotes the highly desirable properties of robustness and accuracy.

The second extension assumes extra space is allowed, as long as it contains information of bounded norm at the end of the computation. 

The main technical contribution is an extension of matrix  entropy used in Ailon (2013) for unitary matrices to a potential function computable for any invertible matrix, using  ``quasi-entropy'' of ``quasi-probabilities''.
\end{abstract}

\section{Introduction}
The (discrete) normalized Fourier transform is a complex linear mapping sending an input $x\in \C^n$ to $y=Fx\in \C^n$, where $F$ is an $n\times n$ unitary matrix defined by
$$ F(k,\ell) = n^{-1/2}e^{-i2\pi (k-1)(\ell-1)/n}\ .$$
The \emph{unnormalized Fourier transform} matrix is defined as $n^{1/2} F$. \footnote{The unnormalized Fourier transform is sometimes  referred to, in literature, as the ``Fourier transform''.  We prefer to call $F$ the Fourier transform, and $\sqrt n F$ the unnormalized Fourier transform.}
The Fast Fourier Transform (FFT) of Cooley and Tukey \cite{CooleyT64} is a method for computing the Fourier transform  (normalized or not - the adjustment is easy) of a vector $x\in \C^n$
in time $O(n\log n)$ using a so called linear algorithm.  A linear algorithm, as defined in
\cite{Morgenstern:1973:NLB:321752.321761},  is a sequence $\F_0,\F_1,\dots$, where each $\F_i$
is a set of affine functions,  for each  $i\geq 0$ $\F_{i+1} = \F_i \cup \{\lambda_i f + \mu_i g\}$ for some  $\lambda_i, \mu_i\in \C$ and $f,g\in \F_i$, and $\F_0$ contains (projections onto) the input variables as well as constants.

It is trivial  that computing the Fourier Transform requires a linear  number of steps, but 
no non-trivial lower bound is 
known without making very strong assumptions about the computational model.   Papadimitriou, for example, computes in \cite{Papadimitriou:1979:OFF:322108.322118} an $\Omega(n\log n)$ lower bounds for Fourier transforms in finite fields using a notion of an information flow network.   It is not clear how to extend
that result to the Complex field.  There have also been attempts \cite{Winograd76} to reduce the constants hiding in the upper bound
of $O(n\log n)$, while also separately  counting the number of additions versus the number of multiplications (by constants).
In 1973, Morgenstern proved that if the modulus of the $\lambda_i$'s and $\mu_i$'s is bounded by $1$
then the number of steps required for computing the \emph{unnormalized} Fourier transform in the linear algorithm model is at least 
$\frac 1 2 n\log_2 n$.  It should be noted that Cooley and Tukey's unnormalized FFT indeed can be expressed as a linear algorithm
with coefficients of the form $e^{iz}$ for some real $z$, namely, complex numbers of unit modulus.

The main idea of Morgenstern is to define a potential function for each $\F_i$ in the linear algorithm sequence,
equaling the maximal absolute value of a determinant of a square submatrix in a certain matrix corresponding to $\F_i$.
The technical step is to notice that the potential function can at most double in each step.
The determinant of the unnormalized Fourier transform is $n^{n/2}$, hence the lower bound of $\frac 1 2 n\log_2 n$.

The determinant of the \emph{normalized} Fourier transform, however, is $1$.  
 Morgenstern's method can therefore not be used
to derive any useful lower bound for computing the normalized Fourier transform in the linear algorithm model
with constants of at most  unit modulus.  Using constants of modulus $1/\sqrt 2$ in the normalized version of FFT,  on the other hand,
does compute the normalized Fourier transform in $O(n\log n)$ steps.

The normalized and unnormalized Fourier transforms are proportional to each other, and hence we don't believe there should
be a difference between their computational complexities in any reasonable computational model.\footnote{It should also be noted that the determinant of any submatrix of the Fourier matrix has determinant at most $1$. }
It is important  to note that, due to the model's weakness, Morgenstern's result
teaches us, upon inspection of the proof, that both matrices $\sqrt n F$ (the unnormalized Fourier transform) and  $\sqrt n \operatorname{Id}$ are in the
same complexity class.  
 More generally, it tells us that all unitary matrices scaled up by the same constant ($\sqrt n$ in this case) are in the same complexity class.  
 Ailon \cite{Ailon13} hence studied  the complexity of the Fourier transform \emph{within} the unitary group.  
In his result he showed that, if the algorithm can only apply $2$ by $2$ unitary transformations at each step, then
at least $\Omega(n\log n)$ steps are required for computing the normalized Fourier transform.
The proof is done by defining a potential function on the matrices $M_i$ defined by composing the first $i$ gates.  The potential
function is simply the Shannon entropies of the probability distributions defined by the squared modulus of elements in the
matrix rows.  (Due to unitarity, each row, in fact, thus defines a probability distribution).

This work takes the idea in \cite{Ailon13} a significant step forward, and obtains a $\Omega(n\log n)$ lower bound
for \emph{any scaling} of the Fourier transform in a stronger model of computation which we call the \emph{uniformly well conditioned}.
At each step, the algorithm can either multiply a variable by a nonzero constant, or perform a unitary transformation involving $2$ variables.  The matrix $M_i$ defining the composition of the first $i$ steps must be well conditioned with constant $R$.  This means that $\|M_i\|\cdot \|M_i^{-1}\| \leq R$,
where $\|\cdot\|$ is spectral norm.
Taking this number into account, the actual lower bound we obtain is $\Omega(R^{-1}n\log n)$.
This main result is presented in Section~\ref{sec:mainresult}.
It should be noted that well conditionedness is related to numerical stability:  The less well condtioned
a transformation is, the larger the set of inputs on which numerical errors would be introduced in any
computational model with limited precision.  An important commonly studied example is the linear regression (least squares) problem, in which the condition number controls a tradeoff
between computational complexity and precision \cite{GolubvL}.   
We also note the work of Raz et. al \cite{DBLP:journals/jcss/RazY11}, in which a notion of numerical stability was also used 
to lower bound the complexity of certain functions, although that work does not seem to be directly comparable to this.

Another limitation of \cite{Ailon13} is that no additional memory (extra variables) were allowed in the
computation.  (This limitation is not present in \cite{Morgenstern:1973:NLB:321752.321761}.)
In Section~\ref{sec:space} this limitation is removed, assuming a bound on the amount of information
held in the extra space at the end of the computation.

\subsection{Different Types of Fourier Transforms}

In this work we will assume that $n$ is even and will use $F$ to denote one of the following:
\begin{enumerate}
\item The real orthogonal $n\times n$ matrix computing
the (normalized) complex discrete Fourier transform (DFT) of order $n/2$ on an input $\hat x \in \C^{n/2}$, where the real part of $\hat x$ is stored in $n/2$ coordinates and the imaginary part in the remaining $n/2$. 
\item The (normalized) Walsh-Hadamard Fourier transform, where $n$ is assumed to be an integer power of
two and $F(i,j) = \frac 1 {\sqrt n}(-1)^{\langle [i-1],[j-1]\rangle}$, where for $a\in [0,2^{\log n}-1]$, $[a]$ is the binary vector representing $a$ in base $2$, and $\langle \cdot,\cdot \rangle$ is dot-product over $Z_2$.  It
is well known that $Fx$ given $x\in \R^n$ can be computed in $O(n\log n)$ oeprations using the so called Walsh-Hadamard transform.  All the above discussion on Morgenstern's result applies to this transformation as well.
\end{enumerate}
In fact, the field of harmonic analysis defines a Fourier transforms corresponding to any Abelian group of order $n$, but this abstraction
would not contribute much to the discussion.  Additionally, our results apply to the well known (and useful) cosine transform, which
is a simple derivation of DFT.  In any case,  DFT and Walsh-Hadamard are central to engineering
and the reader is invited to concentrate on those two.

\subsection{A Note on Quantum Computation}
It is important to note that we are in the classical setting, not quantum.  A quantum version of the Fourier transform can be computed
in time $O(\log^2 n)$ using an algorithm by Shor (refer e.g. to Chapter~5, \cite{NielsenC10}) but that setting is different.


\subsection{A Note on Universality of our Model}
We argue that the model of computation studied in this work is suitable for studying any algorithm
which computes a linear transformation by performing a sequence of simple linear operations. 
Imagine a machine which can perform linear operations acting on $k$ variables in one step, for some constant $k$.
Recall the SVD theorem stating that any such mapping $\psi$ can be written as a composition of three linear mappings, where two are
orthogonal and one is diagonal (multiplication by constants).  Additionally, if $\psi$ is nonsingular,
then all the diagonal constants are nonzero.  Also recall that any orthogonal mapping of rank $k$ can be
decomposed into $O(k^2)$ orthogonal mappings, each acting on at most two coordinates.  Hence, up to a constant
speedup factor, such a machine can be efficiently sumilated using our model.  

Additionally, any fixed precision machine cannot afford extreme ill conditionedness.  The worse the condition
number of the computation is (at some point), the higher the maximal ratio between two numbers the machine
would have to be able to represent.  Otherwise viewed, the worse the condition number, the smaller the set of inputs  for which the computation is guaranteed to be numerically stable.

Hence, this work in fact offers the first tradeoff between running time and precision
of scale-free Fourier transform algorithms in a computational model that is relevant to any
fixed precision machine that performs simple linear operations as atomic steps.
Whether this tradeoff is tight is subject to further investigation.

\subsection{Our Main Technique:    Quasi-Entropy as a Potential Function}
Ailon \cite{Ailon13} defined the entropy of a unitary matrix $M\in \C^{n\times n}$ to be
\begin{equation}\label{entropy} \Phi(M) = \sum_{i=1}^n\sum_{j=1}^n f(M(i,j))\ ,
\end{equation}
where for any nonnegative $x$,
\begin{equation}\label{f} f(x) = \begin{cases} 0 & x=0 \\ -|x|^2\log |x|^2  & x> 0\end{cases}\ .
\end{equation}
Since $M$ is unitary, for any row $i$ the numbers $(|M(i,1)|^2,\dots, |M(i,n)|^2)$ form a probability distribution
vector, from which we can view $\Phi(M)$ as the sum of the Shannon entropy of $n$ distributions.
Note that $\Phi(M)$ is always in the range $[0,n\log n]$.  (Throughout, we will take all logarithms
to be in base $2$, as common in information theory). Ailon \cite{Ailon13} claimed, using a simple
norm preservation argument, that for any (complex) Givens matrix $S$, 
\begin{equation}\label{oldAilon}
|\Phi(M) - \Phi(SM)| \leq 2\ ,
\end{equation}
where we remind the reader that a Givens matrix is any unitary transformation acting on two coordinates.
Since $\Phi(\id)=0$ and $\Phi(F) = n\log n$, the conclusion was that at least $\frac 1 2 n\log n$ Givens
operations are required to compute the (normalized) Fourier transformation $F$.

The starting point of this work is extending the definition of $\Phi$ in (\ref{entropy}) to any (nonsingular)
matrix.  Indeed, there is no  reason to believe that an optimal Fourier transform algorithm must be
confined to the unitary group.  Using (\ref{entropy}) verbatim does not help proving a lower bound, 
as one can easily see that $\Phi(M)$ can change by $\Omega(\log n)$ if we multiply a  row of $M$ by a 
nonzero constant $C$ such that  $|C|\neq 1$.  (For example, if a row of $M$ equals $(1/\sqrt{n},\dots, 1/\sqrt{n})$,
then by multiplying the row by $C=2$ additively changes the entropy by $\Omega(\log n)$.)

We now fix this problem.
For simplicity, we will work over $\R$ and not over $\C$.  The complex Fourier transform can be simulated over
$\R$ by doubling the dimension.\footnote{  This can be done by representing the input (and output) using $2n$ variables, half dedicated to the real
part and half to the imaginary part of the complex input.  Accordingly, each matrix element $F(k,\ell)=n^{-1/2}e^{-i2\pi k\ell/n}$ of the complex Fourier transform  becomes a $2\times 2$ rotation matrix with angle $-2\pi k\ell/n$, multiplied by $n^{-1/2}$.  
}  (Note that over $\R$ unitary matrices are referred to as orthogonal matrices, and we shall
follow this convention.)  For any real nonsingular matrix $M$, we define
\begin{equation}\label{entropy2} 
\Phi(M) := -\sum_{i=1}^n\sum_{j=1}^n \f(M(i,j),\ M^{-1}(j,i))\ ,
\end{equation}
where  for all $x,y\in \R$,
\begin{equation}\label{f2} \f(x,y) := \begin{cases} 0 & x\cdot y=0 \\ -x\cdot y\cdot\log |x\cdot y|  & x\cdot y\neq  0\end{cases}\ .
\end{equation}
Note that if $M$ is orthogonal then $M(i,j)={M^{-1}(j,i)}$.  This implies that $M$ defined in (\ref{entropy2}) is an extension of (\ref{entropy}) from the unitary to the
nonsingular group.  Also note that for all $i$, the numbers $M(i,1)M^{-1}(1,i),\dots, M(i,n)M^{-1}(n,i)$ sum
up to one (by definition of matrix inversion) but they do not form a probability distribution vector because
they may be negative or $>1$ in general, hence we think of them as quasi-probabilities (and of $\Phi$ as quasi-entropy).
Our main Lemma~\ref{mainlem} below shows that a Givens rotation applied to $M$ can change $\Phi(M)$ by at most $O(R)$, where $R$ is the condition number of $M$.

\subsection{Contribution and Limitations}

We believe that our main contribution is in showing a lower bound on computation of the
Fourier transform that does not depend on scaling.  Indeed, if  $M_i$ is the matrix
defined by the  composition of the first $i$ steps (see exact model definition below), then $\Phi(M_i)$ is completely insensitive to rescaling  (of all rows) by any arbitrary large or small nonzero number,
because such an operation has no effect on neither $\Phi$ nor on condition numbers.
We argue that the generalized matrix entropy $\Phi$ defined in (\ref{entropy2}), which is interesting in its own right, is an important key to  understanding the complexity of one of the most important linear transformations used in science and engineering, for which an algorithm that is believed to be optimal has been around for half a decade.  
We point out the following shortcoming of this work, which also gives rise to interesting open problems:
Although constant well conditionedness enhances numerical stability,  for many applications it is reasonable to work with condition numbers that grow with $n$, even polynomially.  Our result implies nontrivial bounds for condition number up to $o(\log n)$.  It would be interesting to prove interesting lower bounds for less well conditioned computations.

\section{The Well Conditioned Model of Computation}\label{s}


For a  matrix $M$, we let $M^{(i)}$ denote the $i$'th column of $M$.
Our model of computation consists of layers $L_0,\dots, L_m$, each  containing exactly $n$ nodes and representing
a vector in $\R^n$. The first layer, $L_0 \in \R^n$, is the input.  The last layer $L_m\in \R^n$ is the output.

For $i=1,\dots, m$, the $i$'th gate connects layer $i-1$ with later $i$.
There are two types of gates:  \emph{rotations} and \emph{constants}.
If gate $i$ is a rotation, then there are two indices  $k_i, \ell_i \in [n]$, $k_i < \ell_i$, and an orthogonal matrix $$ A_i = \left ( \begin{matrix} a_i(1,1) & a_i(1,2) \\ a_i(2,1) & a_i(2,2) \end{matrix} \right )\ = \left(
\begin{matrix}
\cos \theta_i & \sin \theta_i \\ -\sin \theta_i & \cos \theta_i \\
\end{matrix}
\right ) .$$
For each $j \not \in\{k_i, \ell_i\}$, $L_i(j) = L_{i-1}(j)$.  The values of $L_i(k_i)$ and $L_i(\ell_i)$ are given as
$$ \left (\begin{matrix} L_i(k_i) \\ L_i(\ell_i) \end{matrix} \right ) = A_i  \left (\begin{matrix} L_{i-1}(k_i) \\ L_{i-1}(\ell_i) \end{matrix} \right )\ .$$

\noindent
Note that the transformation taking $L_{i-1}$ to $L_i$ is known as a \emph{Givens rotation}.


If gate $i$ is of type constant, then it is defined by an index $k_i\in [n]$ and a nonzero $c_i$.  For each $j\neq k_i$, $L_i(j) = L_{i-1}(j)$.  Additionally, $L_i(k_i) = c_i L_{i-1}(k_i)$.

\noindent
We will encode the circuit using the sequence
$$ (k_i,\ell_i,\theta_i, c_i)_{i=1}^m\ ,$$
where we formally define $c_i$ to be $0$ for rotation gates, and $\ell_i=0$ for constant gates.

Let $M_i$ be the matrix transforming $L_0$ (as a column vector) to $L_i$.   We say that $M_i$ is the $i$'th
defining matrix of the circuit. If gate $i$ is a rotation, then  $M_i$ is
obtained from $M_{i-1}$ by replacing rows $k_i$ and $\ell_i$ in $M_{i-1}$ by the application of $A_i$ to these rows, stacked one on top of the other to the right of $A_i$.
If gate $i$ is diagonal, then $M_i$ is obtained from $M_{i-1}$ by multiplying row $k_i$ of $M_{i-1}$ by $c_i$.
Also, $M_0 = \Id$.



\begin{defn}
A layered circuit of depth $m$  is $R$-uniformly well conditioned (for some $R > 1$)
if  $$\max_{i\in[m]}\{ \|M_i\|\cdot \|M^{-1}_i\|\} \leq R\ .$$
\end{defn}

Note that a $1$-uniformly well conditioned circuit recovers the model of \cite{Ailon13} (restricted over the reals).


\section{The Main Result}\label{sec:mainresult}
\begin{thm}\label{thm:main}
If an $R$-uniformly well conditioned layered circuit  $\CT=(k_i,\ell_i,\theta_i,c_i)_{i=1}^m$ computes a transformation that is proportional to the Fourier transform $F$, then  the number of rotations is $\Omega(R^{-1}n\log n)$.
\end{thm}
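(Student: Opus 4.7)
The plan is to track the quasi-entropy $\Phi(M_i)$ along the sequence of defining matrices $M_0,M_1,\ldots,M_m$, using three ingredients: its values at the two endpoints, its invariance under constant gates, and the $O(R)$ bound on its jump across a rotation gate supplied by the main lemma (Lemma~\ref{mainlem}). At $M_0=\Id$ we have $\Id(i,j)\Id^{-1}(j,i)=\delta_{ij}$, so the nonzero-product terms contribute $\f(1,1)=-\log 1 = 0$ and $\Phi(M_0)=0$. At the output, $M_m=\lambda F$ for some nonzero scalar $\lambda$. The key observation is that $\Phi$ is invariant under nonzero rescaling of any row of $M$: scaling row $k$ of $M$ by $c$ scales column $k$ of $M^{-1}$ by $c^{-1}$, so every product $M(i,j)M^{-1}(j,i)$ is preserved and therefore every summand in (\ref{entropy2}) is unchanged. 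Applying this to every row in turn gives $\Phi(M_m)=\Phi(F)$, and for orthogonal $F$ the identity $F^{-1}(j,i)=F(i,j)$ reduces (\ref{entropy2}) to the classical Shannon-entropy sum over the rows of $F$, which has magnitude $\Theta(n\log n)$ since $|F(i,j)|^2=\Theta(1/n)$ in both the Walsh--Hadamard and the real DFT case.

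Next, I account for the effect of each gate. A constant gate multiplies one row of $M_{i-1}$ by a nonzero scalar, which is exactly the row-rescaling operation just shown to preserve $\Phi$, so such gates contribute zero to the total change. A rotation gate produces $M_i=S_iM_{i-1}$ where $S_i$ is an $n\times n$ orthogonal matrix equal to the identity off two coordinates; since $S_i$ is orthogonal, $\|M_i\|=\|M_{i-1}\|$ and $\|M_i^{-1}\|=\|M_{i-1}^{-1}\|$, so $M_{i-1}$ and $M_i$ share the same condition number, which by the $R$-uniform well-conditionedness hypothesis is at most $R$. Lemma~\ref{mainlem} then yields $|\Phi(M_i)-\Phi(M_{i-1})|=O(R)$.

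Telescoping, $|\Phi(M_m)-\Phi(M_0)|=\Theta(n\log n)$, while each constant gate contributes $0$ and each rotation gate at most $O(R)$, so the number of rotation gates must be $\Omega(R^{-1}n\log n)$. The hard part is the main lemma itself: controlling how the quasi-entropy, which depends globally on $M^{-1}$, responds to a two-row (local) perturbation of $M$, given only that the condition number is bounded by $R$. The rest of the argument is bookkeeping crucially enabled by the scale-invariance of $\Phi$, without which one could not obtain a single statement covering every nonzero scaling of the Fourier matrix.
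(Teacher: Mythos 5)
Your proposal follows the paper's own argument essentially verbatim: track the quasi-entropy $\Phi$ along $M_0,\dots,M_m$, note that $\Phi$ is invariant under row rescalings so constant gates (and the overall scale factor $\lambda$ in $M_m=\lambda F$) do not move it, invoke Lemma~\ref{mainlem} for the $O(R)$ jump across a rotation, and telescope between $\Phi(\Id)=0$ and $\Phi(F)=\Theta(n\log n)$. The one place you elide is how the $O(R)$ per-rotation bound actually follows from Lemma~\ref{mainlem}: the lemma only gives a per-column-index bound proportional to $\sqrt{\left(M_i(k_i,q)^2+M_i(\ell_i,q)^2\right)\left(M_i^{-1}(q,k_i)^2+M_i^{-1}(q,\ell_i)^2\right)}$, and the paper then sums over $q$, applies Cauchy--Schwarz, and bounds the resulting row norms of $M_i$ and column norms of $M_i^{-1}$ by the corresponding spectral norms, whose product is $\le R$ by hypothesis --- this is precisely the step where the condition number enters and is worth spelling out rather than bundling into the lemma.
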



\begin{proof}



We begin with an observation, which can be
proven with a simple induction: For any $i\in[m]$, $(M_i^{-1})^T$ is the $i$'th defining matrix of a circuit $\CT'$
defined by $(k_i,\ell_i,\theta_i,c_i')_{i=1}^m$, where $c_i'=1/c_i$ if the $i$'th gate of $\CT$ is of type constant,
and $0$ otherwise.
A clear consequence of this observation is that if the $i$'th gate of $\CT$ is of type constant, then
$$ \Phi(M_{i-1}) = \Phi(M_i)\ .$$
Indeed, just notice that for $p=k_i$ and any $q\in[n]$, $M_i(p,q) = c_i M_{i-1}(p,q)$ and $(M_i^{-1})^T(p,q) = c_i^{-1}(M_{i-1}^{-1})^T(p,q)$.
We analyze the effect of rotation  gates on $\Phi$.  To this  end, we need the following lemma.
\begin{lem}\label{mainlem}
Recall $\f$ as in (\ref{f2}).  For $4$ real numbers $w,x,y,z$, define
$$\Psi(w,x,y,z) =  \f(w,x)+ \f(y, z)\ .$$
\noindent
Now define
\begin{eqnarray*}\label{eq:mainlem}
\alpha(w,x,y,z) &=& \sup_{\theta\in [0,2\pi]} \Psi(w\cos \theta + y\sin \theta, x\cos \theta + z\sin \theta, -w\sin \theta + y \cos \theta, -x\sin \theta + z\cos \theta) \\
\beta(w,x,y,z) &=& \inf_{\theta\in [0,2\pi]} \Psi(w\cos \theta + y\sin \theta, x\cos \theta + z\sin \theta, -w\sin \theta + y \cos \theta, -x\sin \theta + z\cos \theta)\ . \\
\end{eqnarray*}
Then
\begin{equation}\label{mainlemeq}
\sup_{w,x,y,z} \frac{\alpha(w,x,y,z)- \beta(w,x,y,z)}{\sqrt{(w^2+y^2)(x^2+z^2)}}= O(1)\ .
\end{equation}
(we formally define the last fraction as $0$ if either $w^2+y^2=0$ or $x^2+z^2=0$.  Note that in this degenerate case both $\alpha(w,x,y,z)=0$ and $\beta(w,x,y,z)=0$ ).  
\end{lem}

The proof of the lemma is deferred to Section~\ref{sec:proofmainlem}. Now let  $i$ be such that
the $i$'th gate is a rotation. Then using the definition of $\Psi$ as in the lemma,
\begin{eqnarray}
\Phi(M_i) - \Phi(M_{i-1}) &=& \sum_{q=1}^n \left [\Psi(M_i(k_i,q),\, M_i^{-1}(q, k_i),\, M_i(\ell_i,q),\, M_i^{-1}(q,\ell_i)) \right . \nonumber \\
& &  -  \left . \Psi(M_{i-1}(k_i,q),\, M_{i-1}^{-1}(q,k_i),\, M_{i-1}(\ell_i,q),\, M_{i-1}^{-1}(q,\ell_i)) \right] \nonumber
\end{eqnarray}
By Lemma~\ref{mainlem}, hence for some global $C>0$
\begin{eqnarray}
|\Phi(M_i) - \Phi(M_{i-1})| &\leq& C\sum_{q=1}^n \sqrt{\left( M_i(k_i, q)^2 +M_i(\ell_i, q)^2\right) \left ( M_i^{-1}(q,k_i)^2 + M_i^{-1}(q,\ell_i)^2\right )} \nonumber \\
&\leq &  C \sqrt{ \left (\sum_{q=1}^n M_i(k_i, q)^2 +M_i(\ell_i, q)^2  \right ) \left ( \sum_{q=1}^n  M_i^{-1}(q,k_i)^2 + M_i^{-1}(q,\ell_i)^2\right )} \nonumber \\
&\leq& 2C \|M_i\|\cdot \|M_{i}^{-1}\| \leq 2CR\ ,
%
\end{eqnarray}
where the second inequality is Cauchy-Schwarz, and the third is from the definition of condition number (together with the observation that the norm of any row or column of a matrix is at most the spectral norm of the matrix).
Hence,
\begin{eqnarray}
|\Phi(M_i) - \Phi(M_{i-1})| \leq O(R)\ .
\end{eqnarray}

Now notice that $\Phi(M_0) = \Phi(\Id) = 0$ and $\Phi(M_m) = \Phi(F) = n\log n$.
Hence $m = \Omega(R^{-2}n\log n)$, as required.

\end{proof}

\section{Proof of Lemma~\ref{mainlem}}\label{sec:proofmainlem}
If either $(w,y)=(0,0)$ or $(x,z)=(0,0)$ then the LHS of (\ref{mainlemeq}) is clearly $0$.
Assume first that the vectors $(w,y)$ and $(x,z)$ are not proportional to each other.
Without loss of generality, we can assume that the vector direction $(1,0)\in \R^2$ is an angle bisector of the two segments
connecting the origin with $(w,y)$ and $(x,z)$.  In words, there exist numbers $r,s> 0$ and an angle $\phi$
such that
\begin{eqnarray*} 
(w,y) &=& \left (r\cos \frac \phi 2, r\sin \frac  \phi 2\right) \\
(x,z) &=& \left (s\cos \frac \phi 2, -s\sin \frac \phi 2\right)\ . \\
\end{eqnarray*}
By symmetry, we can assume that $\phi\in[-\pi/2,\pi/2]\setminus\{0\}$, because otherwise we could replace $w$ with $-w$ and $y$ with $-y$, which would result in negation of $\Psi$ (leaving $(\alpha-\beta)$ untouched).
In fact, we can assume that $\phi\in(0,\pi/2]$, because otherwise we would replace the roles of $(w,y)$ and $(x,z)$).
With this notation, we have for all $\theta\in [0,2\pi)$
\begin{eqnarray}
w\cos\theta + y\sin \theta = r\cos\left(\frac \phi 2 + \theta\right )  & &
x\cos\theta + z\sin \theta = s\cos\left(-\frac \phi 2 + \theta\right ) \\
-w\sin\theta + y\cos \theta = r\sin\left(\frac \phi 2 + \theta\right ) & &
-x\sin\theta + z\cos \theta = s\sin\left(-\frac \phi 2 + \theta\right ) 
\end{eqnarray}
Therefore, 
\begin{eqnarray}
\Psi(w\cos \theta + y\sin \theta, x\cos \theta + z\sin \theta, -w\sin \theta + y \cos \theta, -x\sin \theta + z\cos \theta)  = \nonumber
\end{eqnarray}
\begin{eqnarray}
  -rs \cos\left(\frac \phi 2 + \theta\right )\cos\left(-\frac \phi 2 + \theta\right ) \log \left |rs \cos\left(\frac \phi 2 + \theta\right )\cos\left(-\frac \phi 2 + \theta\right ) \right | \   \label{row1} \\
   -rs \sin\left(\frac \phi 2 + \theta\right )\sin\left(-\frac \phi 2 + \theta\right ) \log \left |rs \sin\left(\frac \phi 2 + \theta\right )\sin\left(-\frac \phi 2 + \theta\right ) \right | .  \label{row2} 
\end{eqnarray}
We view the last expression as  a function of $\theta$, and write $\Psi(\theta)$ for shorthand.
The function $\Psi$ is differentiable everywhere except $\theta \in Q =  \{ \pm \frac\phi 2 + j\frac \pi 2\}$ for $j=0,1,2,\dots$. For $\theta \in Q$, it is not hard to see that $\Psi$ is not a local optimum.  It hence suffices to find local optima of $\Psi$ for $\theta \not \in Q$.  
Consider first the range $\theta \in \left ( -\frac \phi 2, \frac \phi 2\right )$. In this range, the argument inside the absolute value in (\ref{row1}) is positive, while the one inside (\ref{row2}) is negative.  Differentiating with respect to $\theta$, we get
\begin{eqnarray}
\frac {d }{d \theta} \Psi(\theta)  &=& rs\left [  \sin\left(\frac \phi 2 + \theta\right )\cos\left(-\frac \phi 2 + \theta\right ) +  \cos\left(\frac \phi 2 + \theta\right )\sin\left(-\frac \phi 2 + \theta\right ) \right ]  \nonumber\\
& &\hspace{4cm} \times \left [1+ \log\left ( rs \cos\left(\frac \phi 2 + \theta\right )\cos\left(-\frac \phi 2 + \theta\right ) \right ) \right ] \nonumber \\
&+& rs\left [ \cos\left(\frac \phi 2 + \theta\right )\sin\left(-\frac \phi 2 + \theta\right )+\sin\left(\frac \phi 2 + \theta\right )\cos\left(-\frac \phi 2 + \theta\right ) \right ] \nonumber \\
& &\hspace{4cm} \times \left [1- \log \left (-rs \sin\left(\frac \phi 2 + \theta\right )\sin\left(-\frac \phi 2 + \theta\right ) \right )\right ]  \nonumber \\
&=& rs(\sin 2\theta )\left  (2 + \log \left (-\frac{\cos\left(\frac \phi 2 + \theta\right )\cos\left(-\frac \phi 2 + \theta\right )}{\sin\left(\frac \phi 2 + \theta\right )\sin\left(-\frac \phi 2 + \theta\right ) }\right )\right ) \ .   \nonumber
\end{eqnarray}

One checks using standard trigonometry that the last $\log$ is nonnegative.  Hence, $d \Psi/d \theta$ vanishes
only when $\theta=0$.  For this value, 
\begin{equation}\label{ext1} \Psi(0) = -rs \cos^2 \left (\frac \phi 2 \right )\log\left ( r s \cos^2\left ( \frac \phi 2 \right)\right) +rs \sin^2 \left( \frac \phi 2 \right )\log\left ( rs\sin^2 \left (\frac \phi 2\right )\right )\ .
\end{equation}

We now study the case $\theta \in (\phi/2, \pi/2-\phi/2)$.
In this range, the argument inside the absolute values in both (\ref{row1}) and  (\ref{row2}) is positive.
For this case, using a similar derivation as above, the derivative $d\Psi/d\theta$ equals
\begin{eqnarray}
\frac {d }{d \theta} \Psi(\theta)  &=& 
 rs(\sin 2\theta ) \log \left (\frac{\cos\left(\frac \phi 2 + \theta\right )\cos\left(-\frac \phi 2 + \theta\right )}{\sin\left(\frac \phi 2 + \theta\right )\sin\left(-\frac \phi 2 + \theta\right ) }\right ) \ .   \nonumber
\end{eqnarray}

By our assumption on $\phi$, the last derivation vanishes eg when $\theta = \pi/4$.
For this value,
\begin{eqnarray}
\Psi(\pi/4) &=& -rs \cos\left (\frac \phi 2 + \frac \pi 4\right )\cos\left(-\frac \phi 2 + \frac \pi 4\right ) 
\log \left ( r s  \cos\left (\frac \phi 2 + \frac \pi 4\right )\cos\left(-\frac \phi 2 + \frac \pi 4\right )\right ) \nonumber \\
& & -rs \sin\left (\frac \phi 2 + \frac \pi 4\right )\sin\left(-\frac \phi 2 + \frac \pi 4\right ) 
\log \left ( r s  \sin\left (\frac \phi 2 + \frac \pi 4\right )\sin\left(-\frac \phi 2 + \frac \pi 4\right )\right ) \nonumber 
\end{eqnarray}
By basic trigonometry, one verifies that 
\begin{eqnarray*} \cos\left (\frac \phi 2 + \frac \pi 4\right )\cos\left(-\frac \phi 2 + \frac \pi 4\right )  &=&
\frac 1 2 \cos^2\left (\frac \phi 2\right ) - \frac 1 2 \sin^2\left (\frac \phi 2\right ) = \frac 1 2 \cos \phi =  \\
& &\hspace{6cm}\sin\left (\frac \phi 2 + \frac \pi 4\right )\sin\left(-\frac \phi 2 + \frac \pi 4\right )  \nonumber 
\end{eqnarray*}
Plugging in our derivation of $\Psi(\pi/4)$, we get
\begin{eqnarray}
\Psi(\pi/4) &=& -rs\left (  \cos^2\left (\frac \phi 2\right ) -  \sin^2\left (\frac \phi 2\right )\right )
\log \left ( r s  \cos\left (\frac \phi 2 + \frac \pi 4\right )\cos\left(-\frac \phi 2 + \frac \pi 4\right )\right )\ . \nonumber \\
\end{eqnarray}

It is not hard to verify that $\Psi(0)$ and $\Psi(\pi/4)$ are the only extremal values of $\Psi$.  Now notice that
in the expression $\left |\Psi(\pi/4) - \Psi(0)\right |$, the term $\log (rs)$ is cancelled out, and we are left 
with $|\Psi(\pi/4) - \Psi(0)| = rs g(\phi)$, where
\begin{eqnarray}
& & g(\phi) =  \nonumber  \\
& & \ \ \left | \cos^2 \left (\frac \phi 2\right ) \log \left (\frac{\cos^2\left(\frac \phi 2\right)}{\cos\left (\frac \phi 2 + \frac \pi 4\right )\cos\left (-\frac \phi 2 + \frac \pi 4\right )}\right )
+
\sin^2 \left (\frac \phi 2\right ) \log \left (\frac{\cos\left (\frac \phi 2 + \frac \pi 4\right )\cos\left (-\frac \phi 2 + \frac \pi 4\right )}{\sin^2\left(\frac \phi 2\right)}\right )
 \right |\ . \nonumber \\
 &=& \left | \cos^2 \left (\frac \phi 2\right )  \log \cos^2 \left (\frac \phi 2\right )   -\sin^2 \left (\frac \phi 2\right )  \log \sin^2 \left (\frac \phi 2\right ) -\cos \phi \log \frac{\cos \phi}{2} \right |\ .\ \ \
\end{eqnarray}
The function $g(\phi)$ is bounded in the range $\phi\in(0,\pi/4]$, by which we conclude that 
for some global constant $C$,
$$ \left | \sup_\theta \Psi(\theta) - \inf_\theta \Psi(\theta) \right | \leq Cr s \ .$$

This concludes the proof for the case $\phi\not\in \{0,\pi\}$ (modulo $2\pi$). 
If $(w,y)$ and $(x,z)$ are proportional to each other ($\phi\in\{0,\pi\}$), then the analysis uses the same
simple norm preservation argument as in \cite{Ailon13}.  (Details omitted)

\section{Using Additional Space}\label{sec:space}
We assume in this section that aside from the $n$ input variables, the algorithm has access to an additional
memory of total size $N$.  We would like to explore to what extent this additional memory could help in
Fourier computation, using the framework developed in the previous sections.
We will assume throughout that
\begin{equation}\label{boundN} N \leq n\log n\ ,\end{equation}
because, assuming all extra memory is accessed in the linear circuit, $N/2$ is a  lower bound on the depth
of the circuit.  Additionally, we will assume that this additional memory is initialized as $0$.  This is not
a real restriction, because the Fourier transform is a homogenous transformation.\footnote{More precisely, if require
to initialize a subset of the additional memory with values $\neq 0$, then  its total linear contribute to the output variables
must be $0$.}
For convenience, we will work with linear circuits as defined in Section~\ref{s} over $\R^{n+N}$, and

As a warmup, we will also assume that the $N$ \emph{extra output variables} are identically $0$.
In other words, that there is no ``garbage information'' in the $N$ output variables.
This means that, if the circuit has depth $m$  
then $[M_m]_{[n],[n]}=F$  and $[M_m]_{[n+N]\setminus[n],n]}=0$, where for a matrix $A$ and integer sets $I,J$, $[A]_{I,J}$ denotes the submatrix of $A$ corresponding
to rows $I$ and columns $J$.  We will later relax this assumption.

\begin{thm}\label{thm:partial}
If an $R$-uniformly well conditioned layered circuit $\CT$ computes a transformation $M$ such
that $M_{[n],[n]}=F$ and $M_{[n+N]\setminus[n], [n]}=0$, then the number of rotations in the circuit
is $\Omega(R^{-1}n\log n)$.
\end{thm}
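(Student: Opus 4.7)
I would mimic the proof of Theorem~\ref{thm:main}, applying the quasi-entropy potential $\Phi$ from~(\ref{entropy2}) to the full $(n+N)\times(n+N)$ defining matrices $M_i$ of the space-augmented circuit.  Trivially $\Phi(M_0)=\Phi(\Id_{n+N})=0$.  For the final matrix, the hypothesis makes $M_m$ block upper triangular, $M_m=\left(\begin{smallmatrix}F & B\\ 0 & D\end{smallmatrix}\right)$ for some $B$ and some nonsingular $D$, so block inversion gives $M_m^{-1}=\left(\begin{smallmatrix}F^{-1} & -F^{-1}BD^{-1}\\ 0 & D^{-1}\end{smallmatrix}\right)$.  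For every ``off-diagonal'' pair $(i,j)$ in the quasi-entropy sum -- either $i\in[n],\ j\notin[n]$ or $i\notin[n],\ j\in[n]$ -- at least one of $M_m(i,j)$ and $M_m^{-1}(j,i)$ vanishes, so $\Phi(M_m)=\Phi(F)+\Phi(D)=n\log n+\Phi(D)$.

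Next, the per-gate analysis of Section~\ref{sec:mainresult} transfers verbatim to dimension $n+N$: constant gates preserve $\Phi$ via the same $M\mapsto(M^{-1})^T$ observation, and each rotation gate changes $\Phi$ by at most $O(R)$ via Lemma~\ref{mainlem} combined with the Cauchy--Schwarz step that uses only $\|M_i\|\cdot\|M_i^{-1}\|\le R$.  Telescoping yields $|n\log n+\Phi(D)|\le m\cdot O(R)$, so it suffices to show $|\Phi(D)|\le\tfrac12 n\log n$ (say) in order to conclude $m=\Omega(R^{-1}n\log n)$.

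To control $\Phi(D)$ I would exploit that $D$ inherits $R$-well-conditioning from $M_m$: trivially $\|D\|\le\|M_m\|$, and $\|D^{-1}\|\le\|M_m^{-1}\|$ follows from the block-inversion formula above.  Writing $p_{ij}=D(i,j)D^{-1}(j,i)$, each row satisfies $\sum_j p_{ij}=1$ (matrix-inverse identity) and $\sum_j|p_{ij}|\le R$ by Cauchy--Schwarz on the $i$-th row of $D$ and $i$-th column of $D^{-1}$.  A maximum-entropy-style estimate for a signed mass-$1$ vector of $\ell_1$-norm at most $R$ spread over $N$ coordinates then bounds $\big|\sum_j p_{ij}\log|p_{ij}|\big|$ by $O(R\log N)$ per row, whence $|\Phi(D)|=O(NR\log N)$.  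Under the standing assumption $N\le n\log n$ this is well below $\tfrac12 n\log n$ in the natural regime, and the theorem follows.

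The main obstacle I anticipate is keeping the bound on $|\Phi(D)|$ useful in the extreme regime $N\approx n\log n$, where the per-row entropy estimate saturates.  If that boundary case requires additional care, I would either (i) pair each ``mixed'' (Case-B) rotation -- acting on one row in $[n]$ and one in $[n+N]\setminus[n]$ -- with its eventual undoing inside the circuit to bound the signed quasi-entropy it injects into $D$; or (ii) replace $\Phi$ by the restricted potential $\Phi_{[n]}(M)=-\sum_{i\in[n],\ j\in[n+N]}\f(M(i,j),M^{-1}(j,i))$, which already takes the clean values $0$ at $M_0$ and $n\log n$ at $M_m$.  The delicate point in~(ii) is recovering an $O(R)$ change bound for Case-B rotations, since those break the $\Psi$-pair structure that Lemma~\ref{mainlem} exploits; this would demand either a one-summand strengthening of Lemma~\ref{mainlem} or an amortized argument across multiple gates.
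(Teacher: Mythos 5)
The paper does not use the full $(n+N)\times(n+N)$ quasi-entropy and does not attempt to bound $\Phi(D)$. Instead it introduces a \emph{column-restricted} potential
$$\Phi_n(M) := -\sum_{i=1}^{n+N}\sum_{j=1}^{n} \f\bigl(M(i,j),\ M^{-1}(j,i)\bigr),$$
i.e., it sums over \emph{all} rows $i\in[n+N]$ but only over the first $n$ columns $j$. Your option (ii) proposes the transposed restriction --- rows confined to $[n]$, columns free --- and you correctly anticipate that this breaks down: a ``mixed'' rotation acting on one row $k\in[n]$ and one row $\ell\in[n+N]\setminus[n]$ contributes only the $k$-row terms to your $\Phi_{[n]}$, so the two-row pairing in $\Psi$ that Lemma~\ref{mainlem} needs is gone. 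Restricting the other index fixes exactly this. Since every gate acts on a \emph{pair of rows} of $M$ (equivalently a pair of columns of $M^{-1}$), the change in $\Phi_n$ under a rotation at $(k_i,\ell_i)$ is still a sum over $q\in[n]$ of
$\Psi(M_i(k_i,q),\,M_i^{-1}(q,k_i),\,M_i(\ell_i,q),\,M_i^{-1}(q,\ell_i)) - \Psi(\text{prev})$,
so the Cauchy--Schwarz step and the $O(R)$ per-gate bound go through verbatim, with no case analysis on whether $k_i,\ell_i$ lie in $[n]$ or not. The endpoints are as clean as in your (ii): $\Phi_n(M_0)=0$, and $\Phi_n(M_m)=n\log n$ because for $i>n$ and $j\le n$ the hypothesis forces $M_m(i,j)=0$, killing all extra-row contributions. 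That is the entire proof.

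Your primary route does not close. The bound $|\Phi(D)|=O(NR\log N)$ is $\Theta(Rn\log^2 n)$ in the regime $N\approx n\log n$, which is strictly \emph{larger} than the $n\log n$ signal you are trying to isolate, not ``well below $\tfrac12 n\log n$.'' There is also no evident way to get a better a priori bound on $\Phi(D)$, since $D$ is essentially arbitrary (subject only to inherited well-conditioning). So option (ii) with the column restriction --- not the row restriction --- is the intended and correct approach; once you make that swap you are done and no amortized argument, no pairing of rotations with their ``undoing,'' and no one-summand strengthening of Lemma~\ref{mainlem} is needed.
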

\begin{proof}
We proceed as in the proof of Theorem~\ref{thm:main}, except we now work with a partial entropy function
defined as follows:
\begin{equation}\label{entropypartial} 
\Phi_n(M) := -\sum_{i=1}^{n+N}\sum_{j=1}^n \f(M(i,j),\ M^{-1}(j,i))\ ,
\end{equation}
It is easy to see that, as before, for any $i$ such that the $i'th$ gate is a rotation, $|\Phi_n(M_i) - \Phi_n(M_{i+1})| = O(R)$.
We also notice that, by the assumptions, we must have  $[M_m^{-1}]_{[n],[n]}=F^{-1}$ and $[M_m^{-1}]_{[n],[N+n]\setminus[n]}=0$.  This implies, as before, that $\Phi_m(M_0)=0$ and $\Phi_m(M_m) = \Omega(n\log n)$, leading
to the claimed result.
\end{proof}

It is arguably quite restrictive to assume that the extra space must be clean of any ``garbage'' information
  at the end of the computation.   In particular, by inspection of the last proof,
 the ``garbage'' could have a negative contribution to $\Phi_n$, possibly reducing the computational lower bound.
This assumption is relaxed in what follows.
We will first need a technical lemma.
\begin{lem}\label{lemma2appendixA}
There exists a global constant $C_0<1/4$ such that the following holds for all $n$.
Let $\eps\in\R^n$ be such that $\|\eps\|_2  \leq C_0$.  Then 
\begin{equation}\label{lemma2appendixAeq} -\sum_{i=1}^n \frac 1{\sqrt n}\left ( \frac 1 {\sqrt n} + \eps_i\right ) \log\left |  \frac 1{\sqrt n}\left ( \frac 1 {\sqrt n} + \eps_i\right )\right | \geq \frac 3 4 \log n \ .
\end{equation}
\end{lem}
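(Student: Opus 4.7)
The plan is to reparametrize by setting $\delta_i := \sqrt n\,\eps_i$, so that $\sum_i\delta_i^2 = n\|\eps\|_2^2 \leq nC_0^2$ and $\frac{1}{\sqrt n}(\frac{1}{\sqrt n}+\eps_i) = \frac{1+\delta_i}{n}$. Expanding the logarithm of $\frac{|1+\delta_i|}{n}$ inside the sum shows the claimed inequality is equivalent to
$$U := \sum_i (1+\delta_i)\log|1+\delta_i| \;-\; (\log n)\sum_i\delta_i \;\leq\; \frac{n}{4}\log n. \qquad(\star)$$
The core ingredient is a pointwise upper bound on $h(\delta):=(1+\delta)\log|1+\delta|$, obtained by splitting at $\delta=-\tfrac12$. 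For $\delta > -\tfrac12$ one has $h''(\delta) = \frac{1}{(1+\delta)\ln 2}\leq \frac{2}{\ln 2}$, so second-order Taylor expansion around $\delta=0$ (using $h(0)=0$ and $h'(0)=\log e$) yields $h(\delta)\leq \delta\log e + \delta^2/\ln 2$. For $\delta\leq -\tfrac12$, $h$ is uniformly bounded by $\tfrac{1}{e\ln 2}$, the global maximum of $r\mapsto -r\log r$ on $r\geq 0$.

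Let $B:=\{i:\delta_i\leq -\tfrac12\}$; each such index contributes $\delta_i^2\geq 1/4$, so $|B|\leq 4\sum_i\delta_i^2\leq 4nC_0^2$. I would sum the two pointwise bounds over $B$ and its complement, then apply Cauchy--Schwarz twice: once to bound $|\sum_{i\in B}\delta_i|\leq \sqrt{|B|}\sqrt{\sum_i\delta_i^2}\leq 2nC_0^2$ (allowing one to replace $\sum_{i\notin B}\delta_i$ by $\sum_i\delta_i$ up to an $O(nC_0^2)$ error), and a second time to bound $|\sum_i\delta_i|\leq \sqrt n\sqrt{\sum_i\delta_i^2}\leq nC_0$. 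This yields
$$U \;\leq\; (\log e-\log n)\sum_i\delta_i + O_{C_0}(n) \;\leq\; nC_0\log n + O_{C_0}(n),$$
and since $C_0$ is strictly less than $1/4$, the coefficient $C_0$ of the leading $n\log n$ term leaves a slack $(\tfrac14-C_0)\,n\log n$ against the target in $(\star)$ which dominates the $O_{C_0}(n)$ remainder once $n$ exceeds a constant depending only on $C_0$. This establishes $(\star)$.

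The principal obstacle is arithmetic: the condition $C_0 < 1/4$ is exactly what is needed so that the Cauchy--Schwarz bound $nC_0\log n$ fits strictly inside $\tfrac{n}{4}\log n$, leaving just enough slack to absorb the lower-order quadratic-in-$\delta$ and boundary contributions (which together are $O_{C_0}(n)$). Note that for very small $n$ the stated inequality can in fact be violated (e.g.\ $n=1$ with $\eps_1=C_0>0$ gives LHS $<0$), so the claim is implicitly for $n$ above a fixed threshold determined by $C_0$, consistent with $(\star)$ being an asymptotic statement.
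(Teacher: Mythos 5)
Your proof is correct and takes a genuinely different route from the paper's. The paper works directly with $g(z)=-\tfrac{1}{\sqrt n}(\tfrac{1}{\sqrt n}+z)\log\bigl|\tfrac{1}{\sqrt n}(\tfrac{1}{\sqrt n}+z)\bigr|$, splits the sum by the \emph{sign} of $\eps_i$, handles $\eps_i\ge 0$ by monotonicity of $g$ on $[0,1/2]$ (each term $\ge g(0)=\tfrac1n\log n$), and handles $\eps_i<0$ by introducing an auxiliary point $z_0<-1/\sqrt n$ whose tangent line passes through $(0,g(0))$, asserting (without proof) that this tangent supports $g$ on the left half-line, and then finishing with $\|\eps\|_1\le\sqrt n\|\eps\|_2$. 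You instead reparametrize by $\delta_i=\sqrt n\,\eps_i$, reducing the claim to the cleaner $(\star)$, split at $\delta=-\tfrac12$ rather than at $0$, use a second-order Taylor bound on $h(\delta)=(1+\delta)\log|1+\delta|$ around $0$ for the good indices, a global cap on $-r\log r$ for the bad indices, and a counting argument $|B|\le 4\sum\delta_i^2$ to control the bad set, with Cauchy--Schwarz appearing twice. What this buys you is a self-contained argument that avoids the paper's unverified ``trivial to show'' assertions about $z_0$ and its slope bound $g'(z_0)$, and it makes the role of the constant explicit: the leading term of your upper bound is exactly $C_0\,n\log n$, so $C_0<1/4$ is precisely the condition that leaves slack $(\tfrac14-C_0)n\log n$, whereas the paper's constant is $C_0C_2$ for an uncomputed $C_2$ and is just ``set appropriately.'' You also correctly flag that the stated ``for all $n$'' cannot literally hold at $n=1$ and that the bound is really asymptotic; the paper's proof carries the same implicit restriction (its monotonicity-on-$[0,1/2]$ and bounds on $z_0$ likewise require $n$ above a constant), so this is a shared imprecision of the statement rather than a gap in your argument.
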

The proof of the lemma is deferred to Appendix~\ref{appendixproof}.
We are now ready to state and prove the main result in the section.
We will prove the result for the Walsh-Hadamard Fourier transform for simplicity only, although a technical
extension of the last lemma can be used to prove a similar result for any Fourier transform.
\begin{thm}\label{thm:partial2}
Assume $F$ is the Walsh-Hadamard matrix.
Let $\CT$ be an $R$-uniformly well conditioned layered circuit of depth $m$.
Assume that $[M_m]_{[n],[n]}=F$ and that additionally the spectral norm of
$[M_m]_{[N+n]\setminus[n],[n]}$ is at most $C_0/R$, where $C_0$ is from Lemma~\ref{lemma2appendixA}.
  Then $m =\Omega(R^{-1}n\log n)$.
\end{thm}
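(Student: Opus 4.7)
The plan is to mirror the proof of Theorem~\ref{thm:partial}, working with the partial quasi-entropy $\Phi_n$ from~(\ref{entropypartial}). I would first observe that the per-step bound $|\Phi_n(M_i) - \Phi_n(M_{i-1})| = O(R)$ carries over verbatim, since the Cauchy--Schwarz plus condition-number argument in the proof of Theorem~\ref{thm:main} is local to two rows and transfers unchanged to the $(n+N)$-dimensional setting; and $\Phi_n(M_0) = \Phi_n(\Id_{n+N}) = 0$. The entire new content is to prove $|\Phi_n(M_m)| = \Omega(n\log n)$ despite the garbage in the last $N$ rows of the first $n$ columns of $M_m$.

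To set this up I would write $M_m$ and $M_m^{-1}$ in block form, with the upper-left and upper-right blocks of $M_m^{-1}$ denoted $P$ and $Q$:
\[
M_m = \begin{pmatrix} F & G \\ E & H \end{pmatrix},\qquad
M_m^{-1} = \begin{pmatrix} P & Q \\ R' & S \end{pmatrix}.
\]
Only $P$ and $Q$ contribute to $\Phi_n(M_m)$. From $M_m^{-1}M_m = I_{n+N}$ one extracts $PF + QE = I_n$, hence $P = F^{-1} - QEF^{-1}$. Because $\|M_m\| \geq \|F\| = 1$, the well-conditioning hypothesis gives $\|Q\| \leq \|M_m^{-1}\| \leq R$, and so $\|P - F^{-1}\| \leq \|Q\|\,\|E\|\,\|F^{-1}\| \leq C_0$; in particular, for every $j \in [n]$ the row $(P - F^{-1})_{j,\cdot}$ has $\ell_2$ norm at most $C_0$. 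I would then split $\Phi_n(M_m) = A + B$, with $A$ the contribution of rows $i \in [n]$ and $B$ the contribution of rows $i > n$. For $A$, writing the Walsh--Hadamard entry $F(i,j) = \sigma_i^{(j)}/\sqrt{n}$ with $\sigma_i^{(j)} \in \{\pm 1\}$ and using $F^{-1} = F$,
\[
F(i,j)\, P(j,i) \;=\; \tfrac{1}{\sqrt{n}}\bigl(\tfrac{1}{\sqrt{n}} + \eps_i^{(j)}\bigr),\qquad \eps_i^{(j)} = \sigma_i^{(j)}(P - F^{-1})(j,i),
\]
with $\|\eps^{(j)}\|_2 \leq C_0$. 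Applying Lemma~\ref{lemma2appendixA} column by column then yields $|A| \geq \tfrac{3}{4}\, n\log n$.

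For $B$, I would use that, for fixed $j$, Cauchy--Schwarz combined with the norm bounds on a column of $E$ and on a row of $Q$ gives $\sum_{i > n} |E(i{-}n,j)\,Q(j,i{-}n)| \leq \|E^{(j)}\|_2 \cdot \|Q_{j,\cdot}\|_2 \leq (C_0/R)\cdot R = C_0$. Maximizing $\sum_i |x_i|\log(1/|x_i|)$ subject to $\sum_i |x_i| \leq C_0$ across $N \leq n\log n$ nonnegative coordinates (attained at the uniform distribution) then gives $|B_j| \leq C_0 \log(N/C_0)$, hence $|B| = O(C_0\, n\log n)$. Taking $C_0$ to be a sufficiently small absolute constant (shrinking the one from Lemma~\ref{lemma2appendixA} if necessary, so that the implicit constant for $|B|$ is strictly less than $3/4$) gives $|\Phi_n(M_m)| \geq |A| - |B| \geq \Omega(n\log n)$, and dividing by the $O(R)$ per-step cost yields the claimed $m = \Omega(R^{-1} n\log n)$. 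The main obstacle is precisely this quantitative interplay: one needs the product $\|E\|\cdot\|Q\| \leq C_0$ to be a small absolute constant so that the entropy-style bound $C_0 \log(N/C_0)$ for $B$ remains strictly sub-dominant to the $\tfrac{3}{4}\log n$ main-block contribution from Lemma~\ref{lemma2appendixA}; the two norm budgets $\|E\| \leq C_0/R$ and $\|Q\| \leq R$ exactly compensate via Cauchy--Schwarz to make this possible.
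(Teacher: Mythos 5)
Your proof follows essentially the same route as the paper: split $\Phi_n(M_m)$ into the contribution of the top $n$ rows ($A$) and the bottom $N$ rows ($B$), use the block-inverse identity $PF+QE=\Id_n$ to write the top-left block of $M_m^{-1}$ as $F+\errmat$ with $\|\errmat\|\leq C_0$, and invoke Lemma~\ref{lemma2appendixA} to extract the dominant $\tfrac34 n\log n$ term, with the per-step $O(R)$ bound carrying over from Theorem~\ref{thm:main}. The only minor departure is in bounding $B$: the paper cites Lemma~\ref{lemmaappendixA} with the $\ell_2$ norm budgets $\|u_j\|\leq C_0/R$ and $\|v_j\|\leq R$, while you apply Cauchy--Schwarz directly to get $\sum_i|E(i,j)\,Q(j,i)|\leq C_0$ and then maximize $\sum|x_i|\log(1/|x_i|)$ under that $\ell_1$ constraint, yielding the same $O(C_0\log n)$ per-column bound by a self-contained argument that sidesteps the appendix lemma; your sign-flip $\eps_i^{(j)}=\sigma_i^{(j)}(P-F^{-1})(j,i)$ also makes explicit the technicality the paper glosses over.
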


Note that in both Theorems~\ref{thm:partial} and~\ref{thm:partial2} the normalization chosen in the theorems is immaterial.  For example, we could have replaced $F$ and $C_0/R$ in 
Theorem~\ref{thm:partial2} with $c\cdot F$ and $c\cdot C_0/R$, respectively, for any nonzero $c$.  Hence these lower bounds are insensitive to scaling.  We chose the specific normalization  to eliminate extra constants in the analysis.
\begin{proof}
We will work with the potential function $\Phi_n$ defined in (\ref{entropypartial}), except
that we cannot know the exact value of $\Phi_n(M_m)$ as in the proof of Theorem~\ref{thm:partial}.
Denote the columns of $[M_m]_{[N+n]\setminus[n],[n]}$ by $u_1,\dots,u_n \in \R^{N}$,
 and the columns
of $\left [(M^{-1}_m)^T\right ]_{[N+n]\setminus[n],[n]}$ by $v_1,\dots, v_n\in  \R^N$. 
By the bound on the spectral norm of $[M_m]_{[N+n]\setminus[n],[n]}$, we have in particular a uniform bound on its column
norms:
\begin{equation}\label{eqaa}
\max\{\|u_1\|,\dots, \|u_n\|\} \leq 1/(4R)\ .
\end{equation}
  Hence the norm of any of the first $n$ columns
of $M_n$ is in the range $[1,1+1/(4R)]$.  This implies that the spectral norm $\|M_m\|$ is at least $1$.  By the well conditionedness of $M_m$, we conclude that $\|M^{-1}_m\|$ is at most $R$, by which we conclude that
\begin{equation}\label{eqbb}\max\{\|v_1\|,\dots, \|v_n\|\} \leq R\ .
\end{equation}  

Using Lemma~\ref{lemmaappendixA} in Appendix~\ref{appendixA} together with  the constraints (\ref{eqaa}) and (\ref{eqbb}), we have that for any $j\in [n]$,
\begin{equation}\label{eq00} \sum_{i=1}^{N} \f(u_j(i), v_j(i)) \geq -\frac 1 4 \log 4 - \frac 1 4 \log N= -\frac 1 2 - \frac 1 4 \log N \geq -\frac 1 2 -\frac 1 2 \log n\ , \end{equation}
where in the rightmost inequality we used the assumption that $N\leq n \log n\leq n^2$.

We now need to lower bound the contribution of the upper left square of $M_m$ to the total entropy, namely $\sum_{i,j=1}^n \f([M_m](i,j), [M_m]^{-1}(j,i))$.   By definition if matrix inverse,
$$ [M_m]^{-1}_{[n],[n]} F + [M_m]^{-1}_{[n],[N+n]\setminus [n]} [M_m]_{[N+n]\setminus[n],[n]} = \Id_n\ .$$
Hence,
$$ [M_m]^{-1}_{[n],[n]}  = F - [M_m]^{-1}_{[n],[N+n]\setminus [n]} [M_m]_{[N+n]\setminus[n],[n]} F\ .$$
Letting $\errmat$ denote the error term $-[M_m]^{-1}_{[n],[N+n]\setminus [n]} [M_m]_{[N+n]\setminus[n],[n]} F$,
we can succinctly write $$ [M_m]^{-1}_{[n],[n]} = F+\errmat$$
 and then use the norm chain rule to bound:
\begin{eqnarray}
 \|\errmat\| &\leq& \|[M_m]^{-1}_{[n],[N+n]\setminus [n]}\|\cdot\| [M_m]_{[N+n]\setminus[n],[n]}\|\cdot\| F\| \\
&\leq&  \|[M_m]^{-1}_{[n],[N+n]\setminus [n]}\| \cdot (C_0/R) \cdot 1\ ,
\end{eqnarray}
where we used the  spectral norm bound assumption from the theorem statement.  To bound $\|[M_m]^{-1}_{[n],[N+n]\setminus [n]}\|$, note that
trivially $\|[M_m]^{-1}_{[n],[N+n]\setminus [n]}\| \leq \|[M_m]^{-1}\|$, and $\|[M_m]^{-1}\|$ is  bounded by $R$ (because $\|[M_m]\|\geq \|F\|= 1$ and $[M_m]$ is $R$-well conditioned).  Hence,
\begin{equation}\label{errmatbound}
\|\errmat\| \leq C_0\ .
\end{equation}

The last inequality also implies  that any column of $[M_n]^{-1}_{[n],[n]}$ has norm at most $C_0$.
Using Lemma~\ref{lemma2appendixA} for each  $i\in[n]$, we get
\begin{equation}\label{gkgk}
 \sum_{j=1}^n \f([M_m]^{-1}(j,i), [M_m](i,j)) \geq \frac 1 4 \log n\ .
\end{equation}
(Note that to be precise, to use Lemma~\ref{lemma2appendixA} we need to  flip the sign of $\errmat(j,i)$ whenever $F(i,j)$ is negative, but this is a small technicality.)
Combining (\ref{eq00}) and (\ref{gkgk}) we conclude that

$$ \Phi_n(M_m) \geq  \frac 1 4 n\log n - \frac 1 2 n\ .$$
Since $\Phi_n(M_0)=0$ and $|\Phi_n(M_t) - \Phi_n(M_{t-1})|=O(R)$ for all $t>1$,
we conclude that $m =\Omega(R^{-1}n\log n)$ as required.
\end{proof}

\subsection{Future Work}
We raise the following three questions:  \begin{enumerate}
\item Is the dependence in $R$ in Theorem~\ref{thm:main} tight?  Intuitively we did not take full advantage of the
well conditionedness property.  Indeed, we only used it to bound the product of the norm of a row in a matrix and the corresponding row in its inverse.
\item Is the dependence of the lower bound on the norm of columns of the ``additional space at output'' in $R$ in Theorem~\ref{thm:partial2} tight?
\item 
In Section~4.5.4 in \cite{NielsenC10}, it is shown that most (with respect to the H\"aar measure) unitary operators require $\Omega(n^2)$ steps using $2\times 2$ unitary operations.  This is also true if we relax unitarity and allow well-conditioned circuits.  This implies that the techniques developed here cannot be used to prove tight lower bounds for most unitary matrices,  because the potential of unitary matrices is globally upper bounded by $O(n\log n)$.
Nevertheless, we ask: Can $\Phi(M)$ defined in this work  be used for proving lower bounds for other interesting linear algebraic operations? 
\end{enumerate}

\bibliographystyle{plain}
\bibliography{low_bound_fft}

\appendix
\section{A Useful  Quasi-Entropy Bound}\label{appendixA}

\begin{lem}\label{lemmaappendixA}
Let $x\in \R^n, y\in \R^n$ be two vectors such that $\|x\|_2=\alpha, \|y\|_2=\beta$.
Then
\begin{equation}
-\alpha\beta\log n - |\alpha\beta\log\alpha\beta| \leq \sum_{i=1}^n \hat f(x_i, y_i) \leq -\alpha \beta \log n + |\alpha\beta\log\alpha\beta|\ .
\end{equation}
\end{lem}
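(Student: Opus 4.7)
The plan is to normalize the vectors and split the sum into a ``scale'' piece that absorbs $\log\alpha\beta$ and a scale-invariant ``shape'' piece that contributes the $\log n$. Set $\tilde x = x/\alpha$ and $\tilde y = y/\beta$ so that $\|\tilde x\|_2 = \|\tilde y\|_2 = 1$, and write $\tilde p_i = \tilde x_i\tilde y_i$. Then $x_iy_i = \alpha\beta\,\tilde p_i$, and on the support $\{i:\tilde p_i\neq 0\}$ one has $\log|x_iy_i| = \log(\alpha\beta) + \log|\tilde p_i|$. Substituting into the definition of $\hat f$ and summing gives the clean decomposition
\begin{equation*}
\sum_{i=1}^n \hat f(x_i,y_i) \;=\; -\alpha\beta\log(\alpha\beta)\,\langle\tilde x,\tilde y\rangle \;-\; \alpha\beta\sum_{i=1}^n \tilde p_i\log|\tilde p_i|.
\end{equation*}

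For the first summand, Cauchy--Schwarz gives $|\langle\tilde x,\tilde y\rangle|\leq 1$, so this contribution is bounded in absolute value by $|\alpha\beta\log\alpha\beta|$. For the second summand, put $t_i = |\tilde p_i|\geq 0$; Cauchy--Schwarz again yields $T := \sum_i t_i \leq \|\tilde x\|_2\|\tilde y\|_2 = 1$, and in particular each $t_i\in[0,1]$. Hence
\begin{equation*}
\left|\sum_i \tilde p_i\log|\tilde p_i|\right| \;\leq\; \sum_i t_i|\log t_i| \;=\; \sum_i(-t_i\log t_i),
\end{equation*}
and since $\eta(t) = -t\log t$ is concave on $[0,1]$, Jensen's inequality bounds this by $n\,\eta(T/n) = T\log(n/T)$. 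A one-line derivative check shows that on $[0,1]$ and for any $n\geq 2$ the map $T\mapsto T\log(n/T)$ has derivative $\log(n/T) - 1 \geq \log n - 1 \geq 0$, so it is increasing and therefore bounded by its value at $T=1$, namely $\log n$. Thus the second summand contributes at most $\alpha\beta\log n$ in absolute value.

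Combining the two estimates yields
\begin{equation*}
\left|\sum_{i=1}^n \hat f(x_i,y_i)\right| \;\leq\; \alpha\beta\log n + |\alpha\beta\log\alpha\beta|,
\end{equation*}
which implies in particular the stated lower bound $\sum_i\hat f(x_i,y_i)\geq -\alpha\beta\log n - |\alpha\beta\log\alpha\beta|$, the only direction actually invoked in the proof of Theorem~\ref{thm:partial2} (inequality \eqref{eq00}). The sign of $\tilde p_i$ is absorbed by the triangle inequality at the outset, so no case split into positive and negative parts is needed.

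The main delicate step is the Jensen estimate: the concavity argument only applies in the regime $t\in[0,1]$, and it is precisely Cauchy--Schwarz on the unit-normalized $\tilde x,\tilde y$ that forces the $t_i$ into that regime. If one wanted to push the upper bound tighter than the symmetric form above (for example to capture an exact cancellation reflected in the lemma statement), one would need to track the sign of $\langle\tilde x,\tilde y\rangle$ together with finer structural information about how the $\pm$ signs of $\tilde p_i$ correlate with their magnitudes, which is the only place I see room for sharpening.
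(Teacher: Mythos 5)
Your proof is correct and takes a genuinely different route from the paper's. The paper's argument for the normalized case $\alpha=\beta=1$ is variational: it writes down the Lagrange conditions for extremizing $\sum_i \hat f(x_i,y_i)$ under $\|x\|_2=\|y\|_2=1$, deduces from them that $x_i^2/y_i^2$ is a constant ratio $\lambda_1/\lambda_2$, uses the norm constraints to force $\lambda_1=\lambda_2$ and hence $x_i=\pm y_i$ for all $i$, and finally reduces to the two ``pure-sign'' cases whose extrema are the $\pm$ Shannon entropy of $(x_1^2,\dots,x_n^2)$, lying in $[-\log n,\log n]$. Your decomposition into a scale part $-\alpha\beta\log(\alpha\beta)\langle\tilde x,\tilde y\rangle$ and a shape part $-\alpha\beta\sum_i\tilde p_i\log|\tilde p_i|$, followed by Cauchy--Schwarz to get $\sum_i|\tilde p_i|\le 1$ and then Jensen for the concave map $t\mapsto -t\log t$, reaches the same $\log n$ bound for the shape part without any Lagrange-multiplier bookkeeping (which in the paper is done somewhat loosely: the objective is not differentiable where $x_iy_i=0$, and the boundary cases are not explicitly handled). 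Your argument is cleaner and fully rigorous.

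One clarification on the point you flagged at the end. You noticed that your two-sided estimate has the form $|\sum_i\hat f(x_i,y_i)|\le\alpha\beta\log n+|\alpha\beta\log\alpha\beta|$, whereas the lemma as printed asserts the tighter-looking upper bound $-\alpha\beta\log n+|\alpha\beta\log\alpha\beta|$, and you speculated that obtaining that would require tracking sign correlations. In fact the printed upper bound is simply a sign typo: taking $\alpha=\beta=1$ and $x=y=(1/\sqrt n,\dots,1/\sqrt n)$ gives $\sum_i\hat f(x_i,y_i)=\log n$, which violates $-\log n+0$. The paper's own proof sketch shows the extrema of $\sum_i x_i^2\log x_i^2$ are $0$ and $-\log n$ (there is also a typo there, ``$n\log n$'' should read ``$-\log n$''), so the correct statement is exactly the symmetric bound you derived, $-\alpha\beta\log n-|\alpha\beta\log\alpha\beta|\le\sum_i\hat f(x_i,y_i)\le\alpha\beta\log n+|\alpha\beta\log\alpha\beta|$, and there is no exact-cancellation phenomenon to sharpen. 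You are also right that only the lower bound is used downstream (inequality~(\ref{eq00}) in the proof of Theorem~\ref{thm:partial2}), so the typo is harmless.
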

\begin{proof}
Notice that
$$ \sum_{i=1}^n \hat f(x_i,y_i) = \alpha\beta \sum_{i=1}^n \hat f(x_i/\alpha, y_i/\beta) - \sum_{i=1}^n x_i y_i\log|\alpha \beta|\ .$$
Since $|\sum x_i y_i| \leq \alpha \beta$ by Cauchy-Schwartz, it hence suffices to prove the lemma for the case $\alpha=\beta=1$,
which we assume from now on.
Let $F(x,y) = \sum_{i=1}^n \hat f(x_i, y_i)$.  Then
$$\nabla F(x,y) = (-y_1(\log|x_1y_1|+1)\dots -y_n(\log |x_1y_1|+1),-x_1(\log|x_1y_1|+1)\dots-x_n(\log|x_ny_n|+1))\ .$$
The gradients of the constraints $G_1(x,y)=\|x\|^2$ and $G_2(x,y)=\|y\|^2$ are:
\begin{eqnarray*}
\nabla G_1(x,y)&=(2x_1\dots2x_n,0\dots 0) \\
\nabla G_2(x,y)&=(0\dots 0, 2y_1\dots2y_n)\ .
\end{eqnarray*}
\end{proof}
Using standard optimization principles, any optima $(x,y)$ of $F$ under $G_1=G_2=1$  satisfies that there exists $\lambda_1,\lambda_2$ such that $\nabla F(x,y) = \lambda_1 \nabla G_1(x,y) + \lambda_2 \nabla G_2(x,y)$.
This implies that for all $i\in [n]$, $x_i/y_i=\lambda_1(\log|x_iy_i|+1)$ and $y_i/x_i=\lambda_2(\log|x_iy_i|+1)$,
hence $x_i^2/y_i^2 = \lambda_1/\lambda_2$.  But we assumed that $\|x\|^2=\|y\|^2=1$, hence $\lambda_1=\lambda_2$
and for all $i$ either $x_i=y_i$ or $x_i=-y_i$.

It remains to find the optima of  $ \tilde F(x,b) = \sum \hat f(x_i, b_ix_i)$ under the constraints $\|x\|_2^2=1$ and $b_i=\pm 1$ for all $i$.  It is easy to see that for any fixed $b=b_0$, $$\sup_x \tilde F(x,b_0) \leq \max\{\sup_x \tilde F(x,(1\dots 1)), \sup_x \tilde F(x,-(1\dots 1))\ .$$
Hence it suffices to find the optima of $\tilde F_{+1}(x)=\sum x_i^2\log x_i^2$, which are known from standard
information theory to be $0$ and $n\log n$.

\section{Proof of Lemma~\ref{lemma2appendixA}}\label{appendixproof}
\begin{proof}
Define the  function $g:\R\mapsto \R$ as 
$$ g(z) = - \frac 1{\sqrt n}\left ( \frac 1 {\sqrt n} + z\right ) \log\left |  \frac 1{\sqrt n}\left ( \frac 1 {\sqrt n} + z\right )\right |\ .$$
Let $g'(z)$ define the derivative of $g$ (where exists, namely for $z\neq -1/\sqrt{n}$).  Clearly,
$$ g'(z) = -\frac 1 {\sqrt n} \left ( \log \left |  \frac 1{\sqrt n}\left ( \frac 1 {\sqrt n} + z\right )\right | + 1\right )\ .$$
We split the sum in (\ref{lemma2appendixAeq}) to $i\in I^+ := \{i': \eps_{i'} \geq 0\}$ and $i\in I^- := [n]\setminus I^+$.
For the former case, note that $g$ is monotonically increasing in the range $[0,1/2]$.  Hence,
\begin{equation}\label{g1} \sum_{i\in I^+} g(\eps_i) \geq |I^+|\cdot g(0) =  \frac {|I^+|} n \log n \ .
\end{equation}
(We also used the fact that $|\eps_i|<1/2$ for all $i$.)
For $i\in I^-$, define $z_0<-\frac 1 {\sqrt n}$ to be the unique number such that
\begin{equation}\label{defC0}
\frac{g(0) - g(z_0)}{-z_0} = g'(z_0)\ .
\end{equation}
(This is the unique point to the left of $-\frac 1 {\sqrt n}$ such that the tangent line to $g$ at that point intersects the vertical line $z=0$ at $(0,g(0)) = \left(0,\frac 1 n \log n\right)$.)
It is trivial to show that 
\begin{equation}\label{kkk}
-1\leq z_0 \leq  -\frac{C_1}{\sqrt n} \log n
\end{equation}
 for some global $C_1>0$.
It is also trivial to show that for all $z<0$,
$$ g(z) \geq -g(z_0) + (z-z_0)g'(z_0) = g(0) + z\cdot g'(z_0) = \frac 1 n \log n  - \frac z {\sqrt n}\left ( \log \left |  \frac 1{\sqrt n}\left ( \frac 1 {\sqrt n} + z_0\right )\right | + 1\right )\ ,$$
namely, the graph of $g$ (in the left half plane) lies above the tangent at $z_0$.  This implies, also using (\ref{kkk}), that
$$ \sum_{i\in I^-} g(\eps_i) \geq \frac{|I^-|}{n}\log n + \sum_{i\in I^-} \frac{\eps_i}{\sqrt n} (C_2 \log n + C_3)\ ,$$
for some global $C_2>0$ and  $C_3$.
By setting $C_0$ appropriately and recalling that $\|\eps\|_1\leq \sqrt n \|\eps\|_2$, we get
\begin{equation}\label{g2}
\sum_{i\in I^-} g(\eps_i)  \geq \frac{|I^-|}{n}\log n  - \frac 1 4 \log n\ .
\end{equation}
Combining (\ref{g1}) and (\ref{g2}), we conclude the required.
\end{proof}

\end{document}